\documentclass[aps,pra,10pt,notitlepage,a4paper,twocolumn,superscriptaddress]{revtex4-2}
\usepackage{amsfonts,amssymb,amsmath}       
\usepackage{lmodern}
\usepackage[]{graphics,graphicx}            
\usepackage{amsthm}
\usepackage{bbold,enumitem,mathtools}
\usepackage[breakable]{tcolorbox}
\usepackage{physics}
\usepackage{tikz}
\usetikzlibrary{decorations.pathmorphing,patterns,decorations.markings,matrix,quantikz2}
\usepackage{adjustbox}
\usepackage{xstring}
\usepackage{ifthen}
\usepackage{listings}
\usepackage{booktabs}
\setlength{\marginparwidth }{2cm}
\usepackage{todonotes}
\lstset{
  basicstyle=\ttfamily,
  mathescape
}

    \usepackage{hyperref}
\usepackage[capitalize]{cleveref}

\usepackage[english]{babel}

\makeatletter
\def\bbl@set@language#1{%
  \edef\languagename{%
    \ifnum\escapechar=\expandafter`\string#1\@empty
    \else\string#1\@empty\fi}%
  \@ifundefined{babel@language@alias@\languagename}{}{%
    \edef\languagename{\@nameuse{babel@language@alias@\languagename}}%
  }%
  \select@language{\languagename}%
  \expandafter\ifx\csname date\languagename\endcsname\relax\else
    \if@filesw
      \protected@write\@auxout{}{\string\select@language{\languagename}}%
      \bbl@for\bbl@tempa\BabelContentsFiles{%
        \addtocontents{\bbl@tempa}{\xstring\select@language{\languagename}}}%
      \bbl@usehooks{write}{}%
    \fi
  \fi}
\newcommand{\DeclareLanguageAlias}[2]{%
  \global\@namedef{babel@language@alias@#1}{#2}%
}
\makeatother

\DeclareLanguageAlias{en}{english}

\newtheorem{theorem}{Theorem}

\newtheorem{lemma}{Lemma}

\newtcolorbox{example}[1][]{width=0.48\textwidth,boxrule=0mm,leftrule=1mm,colframe=black!75,sharp corners,before=\par\smallskip\centering,after=\par,title=#1,breakable,left=2pt,right=2pt}

\newenvironment{proof*}[1][\proofname]{%
  
  \begin{proof}[#1]}{\end{proof}}

\renewcommand{\epsilon}{\varepsilon}

\newcommand{\ignore}[1]{}

\pgfset{
  foreach/parallel foreach/.style args={#1in#2via#3}{evaluate=#3 as #1 using {{#2}[#3-1]}},
}

\newcounter{nodenumber}
\newcommand{\drawchain}[2][0.4\textwidth]{%
\def\firstlist{#2}
\begin{center}
\begin{adjustbox}{width=#1}
\begin{tikzpicture}
  \setcounter{nodenumber}{0}
  \foreach \x [count=\c]  in \firstlist
  {%
    \draw [thick] (2*\c,0) -- node[above,pos=0.5] {\x} (2*\c+2,0);
    \stepcounter{nodenumber}
    \node[circle,style={fill=black,minimum width=0.8cm,text=white}] at (2*\c,0) {};
  }
  \node[circle,style={fill=black,minimum width=0.8cm,text=white}] at (2*\arabic{nodenumber}+2,0) {};

\end{tikzpicture}
\end{adjustbox}
\end{center}
}
\newcounter{arraycard}

\forceredefine{}

\begin{document}

\title{Optimising Perfect Quantum State Transfer for Timing Insensitivity}
\date{\today}
\author{Alastair \surname{Kay}}\email{alastair.kay@rhul.ac.uk}
\affiliation{Royal Holloway University of London, Egham, Surrey, TW20 0EX, UK}

\author{Sooyeong \surname{Kim}}\email{sooyeong@uoguelph.ca}
\affiliation{Department of Mathematics and Statistics, University of Guelph, Guelph, ON, Canada N1G 2W1}

\author{Christino Tamon}
\email{ctamon@clarkson.edu}
\affiliation{Department of Computer Science, Clarkson University, Potsdam, New York, USA 13699-5815}

\begin{abstract}
When studying the perfect transfer of a quantum state from one site to another, it is typically assumed that one can receive the arriving state at a specific instant in time, with perfect accuracy. Here, we study how sensitive perfect state transfer is to that timing. We design engineered spin chains which reduce their sensitivity, proving that this construction is asymptotically optimal. The same construction is applied to the task of creating superpositions, also known as fractional revival.
\end{abstract}
\maketitle

Studies of perfect quantum state transfer (PST) \cite{bose2003,christandl2004,christandl2005} were originally motivated by a desire to simplify the act of moving quantum states around within a small quantum device, helping distant qubits to interact when only local interactions are available. Compared to the gate model's sequence of swaps, state transfer can be significantly faster and less sensitive to timing errors. However, little has been done to {\em optimise} this insensitivity; it was asserted in \cite{kay2010a,kay2006b} that the original perfect transfer chain \cite{christandl2004} was essentially optimal. This analysis was too na\"ive, as we will show here. Since then, Kirkland \cite{kirkland2015} has provided tools to characterise the timing sensitivity, but does not address the optimisation problem. Other works \cite{lippnershi2024,lippnershi2025} have observed that it is possible to get very broad arrival peaks, although the method inherently introduces a massive cost that entirely precludes such schemes from practical consideration (the required time scales exponentially in the distance of transfer).

In this paper, we design a family of perfect transfer $N$-qubit chains which are dependent upon an additional parameter $\gamma$. We prove that in the large $\gamma$ limit, these chains are asymptotically optimal in the sense that they optimise the possible trade-off between state transfer time and arrival width, for any length of spin chain. We also show how this construction can be modified to allow the creation of a state superposed between opposite ends of the chain, a task known as fractional revival \cite{chen2007,dai2010,kay2010a,banchi2015}, also with broad arrival characteristics. In comparison to the original perfect transfer chains, which have tightly peaked arrival statistics of the form $\sin^{2(N-1)}(t)$, our new designs achieve, in the limit, $\sin^6(t)$ (or $\sin^8(t)$) for any even (or odd) chain length $N$ of at least 4. Shorter perfect transfer chains are uniquely defined, so there is no optimisation to be performed. We leave largely open the question of chains with high fidelity transfer, faster than the fastest perfect transfer, with broad arrival width, except to numerically check the performance of some existing schemes. All detailed calculations are supported by examples, further details of which may be found in \cite{kay2025d}.

\section{Figure of Merit}

In the task of state transfer on a $N$-qubit spin chain, a Hamiltonian $H$ evolves an unknown quantum state, $\ket{\psi}$, prepared on a single qubit $A$, aiming for that state to arrive at another site $B$ after a fixed period of time $t_0$:
$$
e^{-iHt_0}\ket{\psi}_A\ket{0}^{\otimes N-1}=\ket{0}^{\otimes N-1}\ket{\psi}_B.
$$
For an excitation preserving Hamiltonian,
$$
\left[H,\sum_{n=1}^NZ_n\right]=0,
$$
this problem is reduced to excitation transfer within an $N$-dimensional subspace on which the Hamiltonian is $H_0$:
$$
e^{-iH_0t_0}\ket{1}=\ket{N}.
$$
There is the possibility to include a phase in the arrival, but we will not unduly worry about that here.

We restrict to a nearest-neighbour interaction such that $H_0$ is a tridiagonal matrix, and we consider the field-free case of 0 on the diagonal.
To achieve perfect transfer under these assumptions, the necessary and sufficient conditions are known \cite{kay2010a}: the chain must be symmetric, i.e.\ $SHS=H$, where the symmetry operator is
$$
S=\sum_{n=1}^{N} \ket{N+1-n}\bra{n},
$$
and the spectrum, up to a scale factor $\frac{\pi}{t_0}$, has odd integer gaps between consecutive eigenvalues. Any such spectrum yields a perfect transfer chain -- one just uses the Lanczos algorithm to rebuild the symmetric chain from the specified eigenvalues \cite{karbach2005,kay2010a,gladwell1986}.

In order to allow for imperfect transfer, whether this is due to an improperly designed $H_0$, error in manufacture, noise, or a timing imperfection, we need to quantify success. At a single moment in time, $t$, the state arriving at site $B$ may be described by a density matrix $\rho(t)$, and the corresponding figure of merit is the fidelity \cite{bose2003,christandl2005,kay2010a},
$$
F(t)=\bra{\psi}\rho(t)\ket{\psi}=\frac13+\frac{(1+\sqrt{F_e(t)})^2}{6}.
$$
where $F_e(t)$ is the fidelity of excitation transfer,
$$
F_e(t)=|\bra{N}e^{-iH_0t}\ket{1}|^2.
$$

We are interested in how successful we are if timing goes a little bit wrong. To that end, we characterise the receiver's operations by a function $p(t)$, which specifies the probability that the state is received at time $t$. We imagine this $p(t)$ to be a fixed function of the device and, whatever time one is aiming to receive the state at, $t_0$, we receive the state at time $t$ with probability $p(t-t_0)$. The natural figure of merit is the expected fidelity,
\begin{align*}
\bar F&=\int_0^{\infty}F(t)p(t-t_0)dt\\&=\frac12+\frac16\int_0^{\infty}p(t-t_0)\left(2\sqrt{F_e(t)}+F_e(t)\right)dt.
\end{align*}
For simplicity, we consider the closely related quantity
$$
\tilde F_e=\int_0^{\infty}p(t-t_0)|\bra{N}e^{-iH_0t}\ket{1}|dt.
$$

While the function $p(t)$ might vary from device to device, we generically expect it to be strongly peaked at $t=0$. Hence, it makes sense to perform a small time expansion about this point:
\begin{multline*}
\tilde F_e=\int_0^{\infty}p(\delta t)\left|\bra{N}(e^{-iH_0t_0}-iH_0\delta te^{-iH_0t_0}\right.\\
\left.-H_0^2\frac{\delta t^2}{2}e^{-iH_0t_0}+\order{\delta t^3})\ket{1}\right|dt.
\end{multline*}
If we focus on perfect state transfer chains with a transfer time of $t_0$, then $\bra{N}e^{-iH_0t_0}=\bra{1}$, and thus,
$$
\tilde F_e\approx\int_0^{\infty}p(\delta t)\left|1-\frac{\delta t^2}{2}\bra{1}H_0^2\ket{1}\right|dt.
$$
For a field-free Hamiltonian, $\bra{1}H_0^2\ket{1}=J_{1}^2$, the first (and last) coupling strength. In other words, to leading order, we should aim to minimise the first coupling strength. The symmetry between first and last coupling strengths means that we simultaneously optimise the departure and arrival characteristics.

In order to ensure a fair comparison between different spin chains, we must impose some further constraints. Our aim is to find the best perfect state transfer chain for a given transfer distance (i.e.\ length of chain). We must impose a common maximum coupling strength $J_{\max}$. This already fixes that there is a single perfect transfer chain with the minimum possible transfer time $t_{\min}$ \cite{yung2006,kay2016b}, corresponding to the original perfect transfer chain \cite{christandl2004}. Any other solution must be slower, and we are thus interested in the trade-off between transfer speed and insensitivity to timing errors.

The original analysis due to Kay \cite{kay2010a,kay2006b} started from this approximation and decomposed the behaviour in terms of the spectrum $\{\lambda_n\}$ of $H_0$, and the corresponding weights of the eigenvectors on the first/last site
$$
a_n=|\braket{1}{\lambda_n}|^2=|\braket{N}{\lambda_n}|^2.
$$
Given that
$$
J_1^2=\sum_{n=1}^{N} \lambda_n^2a_n,
$$
if it were possible to independently optimise $\{\lambda_n\}$ and $\{a_n\}$, then you should centre the $\lambda_n$ on 0 and give them as small a spread as possible, subject to the perfect state transfer condition that each eigenvalue is separated by at least $\frac{\pi}{t_0}$. This is achieved by the original chain, where all the gaps are exactly $\frac{\pi}{t_0}$. In fact,
\begin{equation}\label{eq:evecels}
a_n\propto\frac{1}{|q'(\lambda_n)|},
\end{equation}
where the characteristic polynomial of $H_0$ is
$$
q(x)=\prod_{n=1}^{N} (x-\lambda_n).
$$
The fallacy of the argument is thus revealed; the $a_n$ and the $\lambda_n$ are not independent. This then begs the question, can one do any better? Perhaps by allowing greater separation of the $\lambda_n$, the corresponding $a_n$ can be made smaller (subject to the constraint $\sum_na_n=1$). This is exactly what we will achieve, once we have quantified what is the best that one could hope for.

\section{Limiting Behaviour}

One key point of comparison is the original family perfect state transfer solutions \cite{christandl2004,christandl2005,albanese2004}, known as the Krawtchouk chains, which have coupling strengths
\begin{equation}\label{eq:kraw_couple}
J_n=J\sqrt{n(N-n)}.
\end{equation}
$J$ is a scale factor.
The excitation transfer fidelity is
\begin{equation}\label{eq:Kraw_profile}
F_e=\sin^{2(N-1)}(Jt).
\end{equation}
Given that these are the unique solution with the fastest possible perfect transfer \cite{yung2006,kay2016b}, they must be a limiting case. Indeed, for $N=2,3$, they are the unique solutions to perfect state transfer, and are certainly optimal in terms of the sensitivity to timing errors for those particular lengths. In the following subsection, we will prove that the $N=4,5$ cases are also optimal (i.e.\ have the smallest possible $J_1$ for a fixed transfer time), while the primary purpose of the rest of the paper is to construct chains improve upon these ones at larger $N$.

If a chain has perfect transfer at time $t_0$, then we know that by $t_0$, the initial state $\ket{1}$ has evolved into an orthogonal state $\ket{N}$.
The Mandelstam–Tamm theorem \cite{mandelstam1991} states that for an initial state $\ket{\psi}$ to evolve into an orthogonal one, it must take a time at least
$$
\frac{\pi}{2\sqrt{\bra{\psi}H_0^2\ket{\psi}-\bra{\psi}H_0\ket{\psi}^2}}
$$
which, in the present case, is just $\frac{\pi}{2J_1}$.
The Krawtchouk perfect transfer chain, taking $J_{\max}=1$, 
has $J_1=\order{1/\sqrt{N}}$,
which already gives us a very small value of $\bra{N}H_0^2\ket{N}$, but is far from saturating this limit, taking time $\order{N}$ rather than the limit of $\order{\sqrt{N}}$.

The Mandelstam-Tamm theorem was generalised in \cite{anandan1990}, giving us a statement
$$
F_e\leq\sin^2(J_1t).
$$
This shows the optimal arrival function as a function of time, while also letting us handle the case where there is not perfect transfer during the evolution.

\subsection{Improved Bound}

The Mandelstam-Tamm theorem is quite effective, but applies in a very general setting. By imposing perfect transfer, a stronger limit can be found.

\begin{theorem}\label{thm:main}
    For a chain of odd (or even) length $N\geq 4$ with perfect excitation transfer in time $t_0$,
    $$
    J_1^2\geq \frac{\pi\alpha}{2t_0}
    $$
where $\alpha=2$ (or $\alpha=\sqrt{3}$).
\end{theorem}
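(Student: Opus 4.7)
The strategy is to cast the bound as a purely spectral lower bound, dispatch the cases $N=4$ and $N=5$ with explicit product formulas, and reduce larger $N$ to these base cases by an eigenvalue-pair-removal limit combined with induction.

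I first rescale $\lambda_n = \pi\mu_n/t_0$ so that the PST spacing rule becomes that the $\mu_n$ are symmetric about zero with consecutive gaps equal to positive odd integers (and a zero eigenvalue is present when $N$ is odd). In particular the PST rules force the smallest nonnegative eigenvalue to satisfy $\mu_1\geq 1/2$ in the even case and $\mu_1\geq 1$ in the odd case, and the next positive element satisfies $\mu_2\geq \mu_1+1$. Using \eqref{eq:evecels}, the theorem reduces to the dimensionless inequality $\sum_n \mu_n^2 a_n \geq \alpha^2/4$ for the stated $\alpha$.

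For the base cases I would substitute $q(x)=(x^2-\mu_1^2)(x^2-\mu_2^2)$ and $q(x)=x(x^2-\mu_1^2)(x^2-\mu_2^2)$ directly into $\sum_n \mu_n^2/|q'(\mu_n)|\big/\sum_n 1/|q'(\mu_n)|$. After straightforward cancellation one finds
$$
\sum_n \mu_n^2 a_n = \mu_1\mu_2 \quad (N=4), \qquad \sum_n \mu_n^2 a_n = \mu_1^2 \quad (N=5),
$$
so that the PST spacing inequalities give $\mu_1\mu_2\geq 3/4$ and $\mu_1^2\geq 1$ immediately, uniquely saturated by the Krawtchouk spectra.

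For $N\geq 6$ I would induct. Writing $\mu_r$ for the largest positive eigenvalue, the (odd) gap $\mu_r-\mu_{r-1}$ can be taken to be any larger odd integer while preserving PST, so I can send $\mu_r\to\infty$ with the rest of the spectrum fixed. A leading-order expansion of $|q'(\mu_n)|$---each factor $\mu_n^2-\mu_r^2$ for $n<r$ contributes the same $\sim\mu_r^2$ growth to numerator and denominator, while the $n=\pm r$ terms fall off faster---yields
$$
\lim_{\mu_r\to\infty}\sum_n \mu_n^2 a_n^{(N)} = \sum_n \mu_n^2 a_n^{(N-2)},
$$
the right-hand side being the same quantity for the reduced spectrum with $\pm\mu_r$ stripped out, which is itself a valid PST spectrum of length $N-2$. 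By the inductive hypothesis this limit is already $\geq\alpha^2/4$. To upgrade the limit to a bound for finite $\mu_r$ I would establish the monotonicity lemma that $\sum_n \mu_n^2 a_n^{(N)}$ is decreasing in $\mu_r$, which is consistent with every numerical example I have checked.

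The crux is this monotonicity. Differentiating the ratio with respect to $\mu_r$ produces a rational function of all the $\mu_k$ whose sign is not manifestly definite, and I expect the proof will either demand a careful algebraic cancellation exploiting the $\pm$ symmetry of the spectrum and the Vandermonde structure of the $q'(\mu_n)$, or a reformulation through the orthogonal polynomials associated to the spectral measure, where removing a symmetric pair of spectral points has a clean interpretation via Jacobi-matrix interlacing between the length-$N$ and length-$(N-2)$ chains.
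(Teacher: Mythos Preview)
Your overall architecture---verify $N=4,5$ directly, then induct by stripping the extreme eigenvalue pair---is exactly the paper's. Your spectral computations for the base cases are correct and, in fact, cleaner than the paper's direct coupling-based treatment (the paper instead writes down the couplings $\{J_1,J_2,J_1\}$ and $\{J_1,J_2,J_2,J_1\}$ and computes the eigenvalues, reaching the same minima).

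The genuine gap is the inductive step. You correctly identify that your limit $\mu_r\to\infty$ only bounds the limiting value, and that you need monotonicity in $\mu_r$ to pull the bound back to finite $\mu_r$. You then leave this monotonicity unproven. The paper bypasses the issue entirely: rather than tracking a one-parameter family in $\mu_r$, it makes a single finite comparison between the length-$N$ chain and the length-$(N-2)$ chain obtained by deleting $\pm\mu_r$. The key is that the first few moments $\sum_n a_n\lambda_n^{2j}=\langle 1|H_0^{2j}|1\rangle$ are determined by the tridiagonal couplings alone: $\sum a_n=1$, $\sum a_n\lambda_n^2=J_1^2$, $\sum a_n\lambda_n^4=J_1^2(J_1^2+J_2^2)$. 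Combining these with the multiplicative update $\tilde a_n\propto a_n(\lambda_{\max}^2-\lambda_n^2)$ from \eqref{eq:evecels} gives, after two lines of algebra,
\[
\tilde J_1^2 \;=\; J_1^2 \;-\; \frac{J_1^2 J_2^2}{\lambda_{\max}^2 - J_1^2}\;<\;J_1^2,
\]
which is exactly the inequality you need for the induction, with no limit and no monotonicity hypothesis. So your ``reformulation through the orthogonal polynomials / Jacobi-matrix interlacing'' instinct is on the right track, but the actual mechanism is simpler than you feared: it is just the low-order moment identities of the tridiagonal matrix.
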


Before we can prove this, we need a helpful result:
\begin{lemma}\label{lem:family}
For $k \ge 1$,
let $H^{(k)}$ be the family of field-free tridiagonal matrices of size $N_k$ (where $N_{k+1}=N_k+2$ with $N_1=4$ or $5$) that exhibit perfect state transfer in time $t_0$ and have the smallest possible first coupling strength $J_1^{(k)}$. Then, it must be that
$
J_1^{(k+1)}>J_1^{(k)}.
$
\end{lemma}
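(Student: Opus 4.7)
The plan is to exhibit, starting from the optimiser at length $N_{k+1}$, an explicit PST chain of length $N_k$ whose first coupling is strictly less than $J_1^{(k+1)}$; by optimality of $J_1^{(k)}$ this will force $J_1^{(k)}<J_1^{(k+1)}$.

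First, I would label the spectrum of $H^{(k+1)}$ as $\mu_1<\cdots<\mu_{N_{k+1}}$ and set $M=\mu_{N_{k+1}}=-\mu_1$, using the symmetry forced by PST. Deleting the extremal pair $\{\mu_1,\mu_{N_{k+1}}\}$ leaves a symmetric set of $N_k$ eigenvalues whose consecutive gaps are unchanged, hence still odd-integer multiples of $\pi/t_0$. The Lanczos reconstruction then yields a field-free symmetric tridiagonal chain $\tilde H$ of length $N_k$ realising this spectrum, and therefore exhibiting PST at the same time $t_0$. This $\tilde H$ is the object I would compare against the optimum $J_1^{(k)}$.

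The core step is to compute the first coupling $\tilde J_1$ of $\tilde H$ in closed form in terms of data of the original chain. Using (\ref{eq:evecels}) together with the factorisation $\tilde q(x)=q(x)/[(x-\mu_1)(x-\mu_{N_{k+1}})]$, the trimmed weights satisfy $\tilde a_i\propto a_i(M^2-\mu_i^2)$. Normalising and using the symmetry consequence $a_1=a_{N_{k+1}}$ (which itself follows from $|q'(-\mu_n)|=|q'(\mu_n)|$ for a spectrum symmetric about zero), I expect to arrive at
$$
\tilde J_1^{\,2}=\frac{M^2(J_1^{(k+1)})^2-\bra{1}(H^{(k+1)})^4\ket{1}}{M^2-(J_1^{(k+1)})^2}.
$$
The desired strict inequality $\tilde J_1^{\,2}<(J_1^{(k+1)})^2$ then collapses to $\bra{1}(H^{(k+1)})^4\ket{1}>(J_1^{(k+1)})^4$, which is immediate from the direct tridiagonal expansion $\bra{1}H^4\ket{1}=J_1^4+J_1^2J_2^2$ together with $J_2^{(k+1)}>0$ for any genuine chain.

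The main obstacle I anticipate is the bookkeeping around the deleted pair --- verifying that stripping off the two extremal eigenvalues preserves the PST spectral condition (a short parity/gap check) and confirming that Lanczos indeed delivers a bona fide chain of length $N_k$. The remaining technical point, positivity of the denominator $M^2-(J_1^{(k+1)})^2$, is automatic because $(J_1^{(k+1)})^2=\sum_i\mu_i^2 a_i$ is a convex combination of values $\mu_i^2\le M^2$ with strictly positive weight on values strictly below $M^2$.
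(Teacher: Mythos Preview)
Your proposal is correct and follows essentially the same route as the paper: delete the extremal eigenvalue pair from $H^{(k+1)}$, use \cref{eq:evecels} to obtain the trimmed weights $\tilde a_i\propto a_i(M^2-\mu_i^2)$, and compare $\tilde J_1^2$ with $(J_1^{(k+1)})^2$. Your intermediate expression via $\bra{1}H^4\ket{1}=J_1^4+J_1^2J_2^2$ is algebraically identical to the paper's direct form $\tilde J_1^2=(J_1^{(k+1)})^2-(J_1^{(k+1)})^2(J_2^{(k+1)})^2/\Gamma$ with $\Gamma=M^2-(J_1^{(k+1)})^2$, and your extra remarks on the positivity of $\Gamma$ and the preservation of the odd-gap PST condition after trimming fill in minor details the paper leaves implicit.
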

\begin{proof}
We can start with $H^{(k+1)}$ and use it to construct a new $\tilde H^{(k)}$ which has perfect transfer and a value of $\tilde J_1<J_1^{(k+1)}$. By definition, it must be that $J_1^{(k)}\leq \tilde{J}_1$.

To construct $\tilde H^{(k)}$, take the ordered eigenvalues $\lambda_n^{(k+1)}$ of $H^{(k+1)}$, remove $\pm\lambda_1^{(k+1)}$ from the spectrum, and find the corresponding symmetric chain.

Such a chain certainly has PST in the time $t_0$ because we have not altered the spacings of any of the remaining eigenvalues. What is its value of $J_1^2$? We can write the first elements of the eigenvectors as
$$
a_n=a_{n+1}^{(k+1)}({\lambda^{(k+1)}_1}^2-\lambda_n^2)/\Gamma
$$
using \cref{eq:evecels}, where $\Gamma$ allows us to renormalise such that:
\begin{align*}
1&=\sum_{n=1}^{2k}a_n\\
&=\frac{1}{\Gamma}\sum_{n=2}^{2k+1} a_{n}^{(k+1)} ({\lambda^{(k+1)}_1}^2-{\lambda_n^{(k+1)}}^2) \\
&=\frac{1}{\Gamma} ({\lambda^{(k+1)}_1}^2-{J_1^{(k+1)}}^2).
\end{align*}
Finally, we can work out the new coupling strength
\begin{align*}
\tilde J_1^2
&=\frac{1}{\Gamma}\sum_{n=2}^{2k+1} a_{n}^{(k+1)}({\lambda^{(k+1)}_1}^2-{\lambda_n^{(k+1)}}^2) {\lambda_n^{(k+1)}}^2\\
&={J_1^{(k+1)}}^2-\frac{{J_1^{(k+1)}}^2{J_2^{(k+1)}}^2}{\Gamma}
< {J_1^{(k+1)}}^2.
\end{align*}
It follows that ${J_1^{(k)}}\le \tilde{J}_1 < {J_1^{(k+1)}}$.
\end{proof}

 \begin{proof}[Proof of Theorem \ref{thm:main}]
We now know that for a fixed perfect transfer time $t_0$, and a given chain length $N_k\geq 4$, $J^{(1)}_1\leq J^{(k)}_1$. Our aim is to determine $J_1^{(1)}$.

In the even length case, we could reduce all the way down to length 2, recovering the Mandelstam-Tamm limit. For a tighter bound, consider a length 4 chain with couplings $\{J_1,J_2,J_1\}$. This has eigenvalues
$$
\frac12\left(\pm J_2\pm\sqrt{4J_1^2+J_2^2}\right).
$$
There are therefore gaps in this system of size $J_2$ and $\sqrt{4J_1^2+J_2^2}-J_2$, both of which must be odd multiples of $\frac{\pi}{t_0}$. Thus, $J_2=(2m+1)\pi/t_0$ and $\sqrt{4J_1^2+J_2^2}=2n\pi/t_0$ for non-negative integers $n,m$. Rearranging for $J_1$,
$$
J_1^2=\frac{\pi^2}{4t_0^2}(4n^2-(2m+1)^2),
$$
we pick $n=1,m=0$ to minimise $J_1^2$, conveying that
$$
J_1^{(m)}\geq J_1\geq\frac{\pi\sqrt{3}}{2t_0}.
$$

In the case of odd $N$, we take a chain of length 5 with couplings $\{J_1,J_2,J_2,J_1\}$, which has eigenvalues
$$
0,\pm J_1,\pm\sqrt{J_1^2+2J_2^2}.
$$
Since each consecutive gap is an odd multiple of $\frac{\pi}{t_0}$,
$$
J_1^{(k)}\geq J_1=(2m+1)\frac{\pi}{t_0}\geq\frac{\pi}{t_0}.
$$

Incidentally, this proves that the perfect state transfer chains of length 4,5 with the optimal value of $\bra{1}H_0^2\ket{1}$ are the Krawtchouk chains \cite{christandl2004}.
\end{proof}

\section{The T-Rex Construction}\label{sec:clearout}

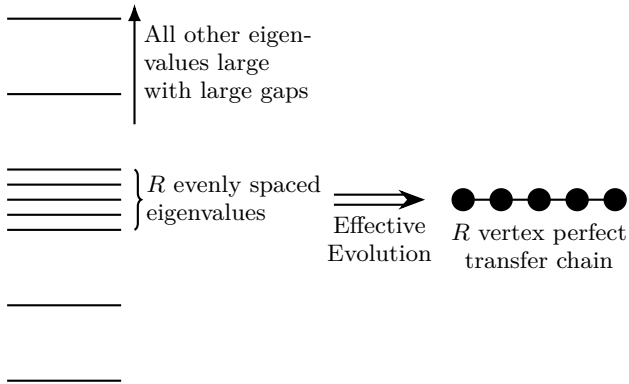
\begin{figure}
\centering
\begin{tikzpicture}[thick]
    \foreach \y in {0,1,-1,2,-2,7,12,-7,-12}
        \draw (0,0.2*\y) -- (1.5,0.2*\y);
    \draw[decoration={brace,mirror,raise=5pt},decorate]
  (1.5,-0.4) -- node[right=6pt,text width=2.5cm] {$R$ evenly spaced eigenvalues} (1.5,0.4);
  \draw [-Latex,xshift=5pt] (1.5,1) -- (1.5,2.6) node [midway,anchor=west,text width=2.5cm] {All other eigenvalues large with large gaps};
\foreach \x in {0,1,2,3,4}
    \node [circle,fill=black,minimum width=0.3cm] at (6+0.5*\x,0) {};
    \draw (6,0) -- (8,0) node [midway,anchor=north,text width=2.4cm,align=center,yshift=-0.2cm] {$R$ vertex perfect transfer chain};
\draw [-Stealth,double distance=3pt] (4.3,0) -- (5.5,0) node [midway,align=center,anchor=north,yshift=-3pt] {Effective\\Evolution};
\end{tikzpicture}
    \caption{For a chain of length $N$, we engineer perfect transfer by choosing a spectrum with odd integer gaps (left). To achieve a broad arrival width, we choose a central core of $R$ eigenvalues, and clear out the rest of them, making them large, and with large gaps. The end-to-end evolution is described by an effective length $R$ perfect transfer chain.
    }\label{fig:scheme}
\end{figure}

We will now design spin chains of arbitrary length $N$ with very broad arrival/departure characteristics which are sure to perform well for any reasonable function $p(t)$. The cost, as expected, is in terms of the state transfer time. We refer to our method, depicted in \cref{fig:scheme}, as the T-Rex as the net result is a chain with puny little arms at either end. We select a set of $R$ eigenvalues ($R$ and $N$ should have the same parity), with uniform spacing, centred on 0, gap 1 (e.g.\ $0,\pm 1,\pm 2,\ldots)$. We will then select the remaining $N-R$ eigenvalues, symmetrically about 0 and satisfying perfect state transfer conditions, but at much higher values, $\order{\gamma}$, each separated by $\order{\gamma}$ (e.g.\ $\pm\frac{\gamma}{2}$, $\pm\frac{3\gamma}{2}$, $\pm\frac{5\gamma}{2},\ldots$, where $\gamma\gg\frac{R}{2}$ and $\gamma$ is an odd integer). Having selected the target eigenvalues, an inverse eigenvalue problem for a symmetric tridiagonal system is easily solved \cite{kay2010a,karbach2005,gladwell2005}. The spectral structure is similar to that of \cite{bailey2025}, except that we actively retain a small number of eigenvalues close to 0.

The idea is that by \cref{eq:evecels} and normalisation, $\sum a_n=1$, the first $R$ eigenvectors have $a_n$ of order $1$, while the others, which we have cleared well away from the central region, are of $\order{\gamma^{1-R}}$ . At large $\gamma$, the effect of those large eigenvalues is negligible for an initial state $\ket{1}=\sum_n\sqrt{a_n}\ket{\lambda_n}$. The arrival statistics will be essentially be the same as those for a Krawtchouk chain of length $R$,
$$F_e\approx\sin^{2(R-1)}\left(\frac{\pi t}{2t_0}\right),$$
improving over the Krawtchouk's arrival profile, \cref{eq:Kraw_profile}.
\begin{figure}
    \centering
    \includegraphics[width=8cm]{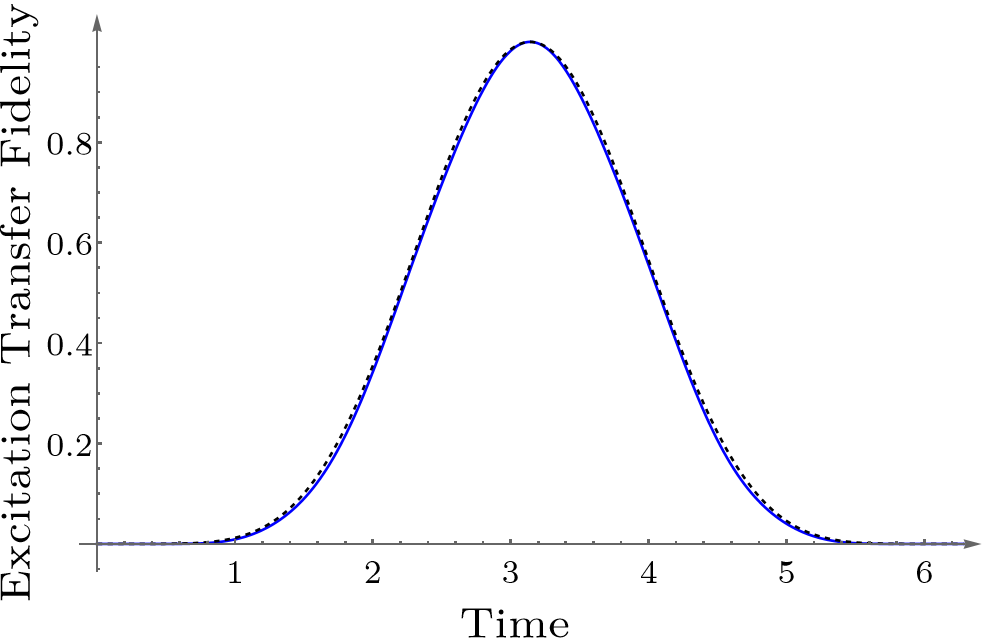}
\caption{Comparison of the excitation transfer fidelity for the T-Rex chain of length 8 (solid) with target arrival function (dashed). We chose to plot $\gamma=13$ since, for larger values, the two curves were essentially indistinguishable.}\label{fig:example}
\end{figure}
\begin{example}
Let us select a spectrum
$$
\pm\frac12,\pm\frac32,\pm\frac{\gamma}{2},\pm\frac{3\gamma}{2},
$$
aiming to create a length 8 chain with 
$$
F_e\approx\sin^{6}\frac{t}{2}.
$$
Solving the inverse eigenvalue problem yields a chain with the following couplings ($\gamma=149$)

\drawchain[0.98\textwidth]{0.8729, 10.54, 128.0, 150, 128.0, 10.54, 0.8729}

In \cref{fig:example}, we compare the evolution of the constructed T-Rex chain with the claimed $\sin^6$ functionality. The relevant code may be found in \cite{kay2025d}.

As a first approximation, we can describe the couplings as deriving from three different elements.
\begin{itemize}
\item The central $N-R$ spins are coupled by a $J=\gamma$ Krawtchouk chain $H_c$, accounting for the $\pm\frac{\gamma}{2},\pm\frac{3\gamma}{2}$ eigenvalues.
\item The two extremal couplings are the ends of a length $R=4$ Krawtchouk chain.
\item The remaining two couplings, $K$, are
$$
K^2\bra{1}H_c^{-1}\ket{N-R}=\frac{R}4.
$$
The right-hand side is the central coupling strength of the length $R$ Krawtchouk chain.
\end{itemize}
These approximations predict the chain to be

\drawchain[0.98\textwidth]{0.8660, 10.57, 129.0, 149, 129.0, 10.57, 0.8660}

\end{example}

By construction, the T-Rex chain is symmetric and has eigenvalues with odd integer gaps. It has perfect transfer at time $t_0=\pi$. We just need to assess how broad its arrival/departure peak is. We start by estimating the coupling strengths.

Starting at one end of the chain, we can evaluate
$$
\bra{1}H_0^k\ket{1}=\bra{N}H_0^k\ket{N}=\sum_na_n\lambda_n^k.
$$
Provided $k<R-1$, the $a_n\lambda_n^k$ values of the large eigenvalues vanish for large $\gamma$. The calculation, to a good approximation, is just that that one would perform for a PST chain of length $R$. The first $\lfloor\frac{R-2}{2}\rfloor$ coupling strengths are $\sqrt{n(R-n)}/2+\order{\gamma^{\frac{1+2n-R}{2}}}$ (\cref{eq:kraw_couple} with $J=\frac12$, determined  by the choice of gap size). These extremal arms are a puny $\order{1}$ strength, compared to the central region, which we will now see are a strength $\order{\gamma}$.

To evaluate the central couplings, calculate $\Tr(H_0^kS)=\sum_n(-1)^{n+1}\lambda_n^k$ progressively for increasing $k$ (some modification is required if $N$ is odd \cite{kay2016b}).
These must be entirely dominated by the large $\gamma$ terms for the length of an $N-R$ qubit chain (where the next coupling would be reported to be 0, and hence the small terms have a chance to be impactful).

\begin{figure}[!t]
\begin{tikzpicture}
    \node[anchor=south west,inner sep=0] (image) at (0,0) {\includegraphics[width=0.48\textwidth]{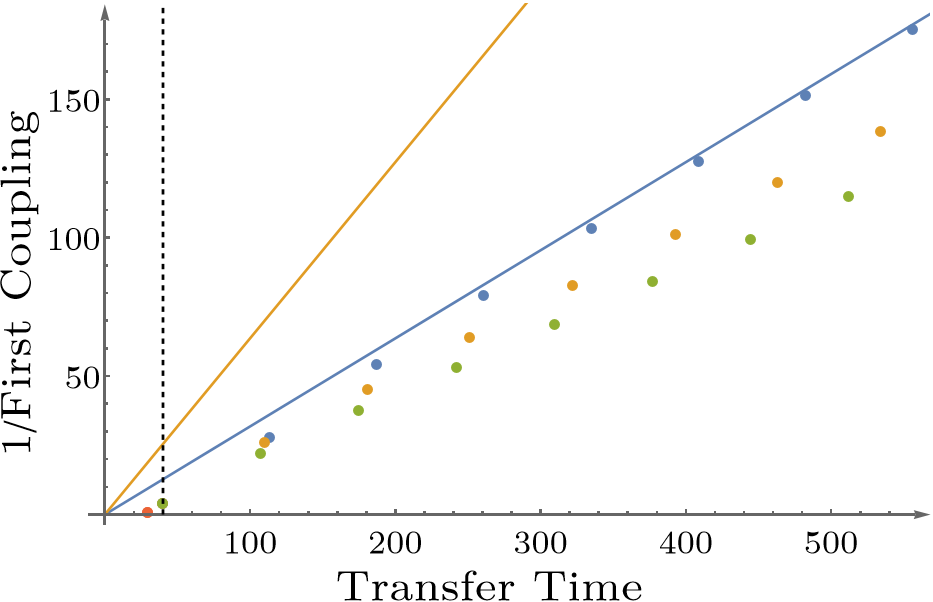}};
    \node [anchor=west,text width=2cm] at (1.5,5) {Optimal PST Time};
    \node [rotate=32] at (6,4.1) {Improved bound for odd $N\geq 5$};
    \node [rotate=51] at (3.5,4.2) {Mandelstam-Tamm};
    \draw [-Stealth] (5.5,1.5) node [anchor=west] {Apollaro} -- (1.42,0.83);
\end{tikzpicture}
\caption{Comparison of Mandelstam-Tamm limit, and improvement of \cref{thm:main} to that achieved by embedding $R=5,7,9$ (blue, orange, green) cases into a length 51 chain, with varying values of $\gamma$. The $R=5$ case, in the large time limit, tends to the bound. See \cite{kay2025d}.}\label{fig:tradeoff}
\end{figure}

We must rescale everything by the maximum coupling strength, and track the change to the transfer time, $t_0\rightarrow \pi J_{\max}$. We claim that the largest coupling strength is the central one; in the limit of large $\gamma$, the regime of our primary interest, this is certainly true as the couplings of the Krawtchouk chain increase monotonically towards the centre, but whether it changes at finite $\gamma$ remains unclear. We can be exact about the central coupling ($N$ even), following \cite{yung2006}. Specifically,
\begin{align*}
2J_{N/2}&=\Tr(H_0S)=\sum_n\lambda_n(-1)^{n+1}\\&=\gamma\frac{N-R}{2}+(-1)^{\frac{N-R}{2}}\frac{R}{2}.
\end{align*}
This deviates from the crudely estimated approximation by only an amount $\frac{R}{2}$, which is small compared to $\gamma$.
After rescaling, we therefore achieve a value of
$$
J_1^2=\bra{1}H_0^2\ket{1}\approx\left(\frac{2\sqrt{R-1}}{\gamma (N-R)}\right)^2
$$
($R\geq 4$) with a perfect transfer time of
$$
t_0=\frac{\pi\gamma(N-R)}{4}.
$$
Thus, overall, we find
$$
J_1t_0\rightarrow\frac{\pi\sqrt{R-1}}{2}
$$
in the large $\gamma$ limit. For $R=4$, this is asymptotically optimal according to \cref{thm:main}. An equivalent calculation using $\Tr(H_0^2S)$ \cite{kay2016b} gives the same result for odd $N$, so that $R=5$ is asymptotically optimal. We can see this realised in \cref{fig:tradeoff}, where the $R=5$ case tends towards the limiting behaviour.

\subsection{Smaller \texorpdfstring{$R$}{R}}

Clearly, it is preferable to reach as small a value of $R$ as possible. We have easily argued the behaviour for $R\geq 4$, but wish to know more about the cases $R=2,3$.

We can still numerically evaluate the coupling coefficients of our T-Rex system in the cases of $R=2,3$, we just don't have the analytic backing. In these instances, indeed, we get evolution corresponding to $\sin^2$ and $\sin^4$ which, in their respective size cases (even, odd length chains respectively) are clearly optimal in terms of the broad arrival width. However, due to the difficult interactions of the small/large terms, they do not saturate the $\bra{1}H_0^2\ket{1}$ approximation. Presumably there are enough high-frequency variations to adversely affect the interpretation of $\bra{1}H_0^2\ket{1}$ even though, at longer timescales, we do not see these effects.

\begin{example}
Consider a symmetric chain of length 8 and spectrum
$$
\pm 1,\pm(1+2\gamma),\pm(1+4\gamma),\pm(1+6\gamma).
$$
For odd integer $\gamma$ this has perfect transfer in a time $t_0=\frac{\pi}{2}$. Once rescaled so that the maximum coupling strength is 1, this yields a T-Rex chain ($\gamma=51$):

\drawchain[0.98\textwidth]{0.086, 0.866, 0.712, 1, 0.712, 0.866, 0.086}

In \cref{fig:P2}, we plot this case, chosen at an intermediate regime to emphasise why the $\bra{1}H_0^2\ket{1}$ term fails. In the large $\gamma$ limit, the $\sin^2$ behaviour emerges, but the transfer time is impacted. While the Mandelstam-Tamm limit conveys that a transfer with this width could be implemented in a time as short as $t_0=18.3$, this takes a factor $\order{\gamma}$ longer at $t_0\approx320$.
\end{example}

\begin{figure}[t]
    \centering
    \begin{tikzpicture}
     \node[anchor=south west,inner sep=0] (image) at (0,0) {\includegraphics[width=0.43\textwidth]{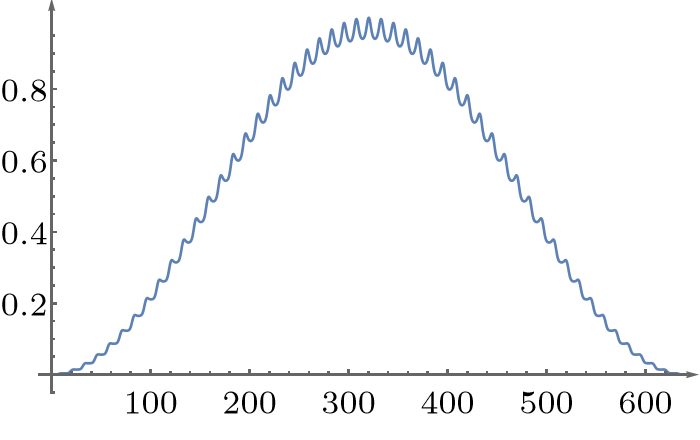}};
     \node [below=-0.2cm of image] {\large Time};
     \node [left=0.15cm of image,rotate=90,anchor=south] {\large Excitation Transfer Fidelity};
    \end{tikzpicture}
    \caption{A T-Rex system where $R=2$ and $N=8$. Although $J_1^2=\bra{1}H_0^2\ket{1}$ is comparatively large, implying poor arrival, this is due to the thin peaks. As $\gamma\rightarrow\infty$, these peaks vanish and the overall optimal $\sin^2$ dependence emerges.}\label{fig:P2}
\end{figure}

While the case of $R=2$ suffers from a $\gamma$ blow-up of the transfer time in achieving its $\sin^2$ behaviour, the $R=3$ case achieves the near-optimal $\sin^4$ behaviour with only an $\order{1}$ multiplier to the transfer time (observed numerically). To see why, imagine that the extremal couplings $J_1=J_{N-1}\sim\frac{1}{\gamma}$ are small, and are the only small ones. In the $R=2$ case, an analysis at second order of degenerate perturbation theory (because the central chain does not have a 0 eigenvalue) conveys that there are two eigenvalues at approximately $\pm J_1^2$, and so perfect transfer must take a time at least $\order{J_1^{-2}}$. On the other hand, when $R=3$, the central chain is of odd length, and has a 0 eigenvalue. The degenerate perturbation theory analysis proceeds at first order; there is a splitting $\order{J_1}$ and so the evolution takes a time $\order{J_1^{-1}}$.

\section{Robustness to Perturbations}

\begin{figure}
    \centering
    \includegraphics[width=0.48\textwidth]{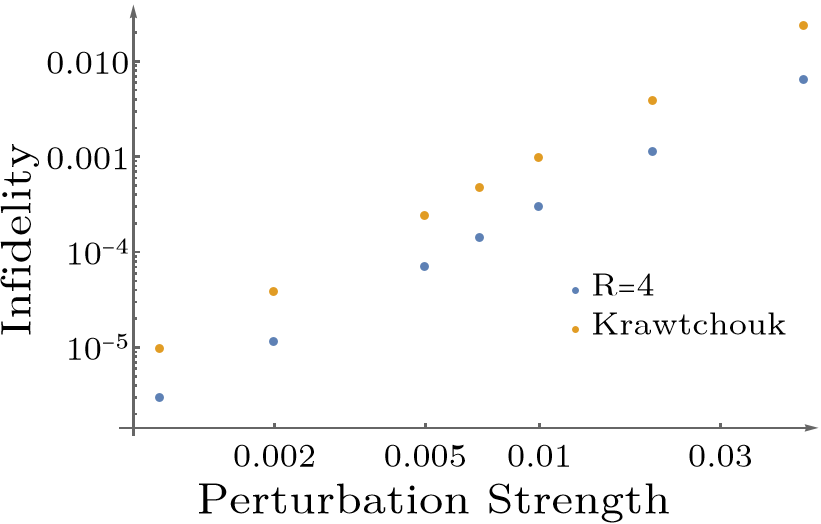}
    \caption{Comparison of robustness to perturbations between Krawtchouk chain and an $R=4$ T-Rex ($\gamma=21$) for chains of length 50 with maximum coupling strength of 1. Perturbations are only applied to the central 45 couplings. Plotted values are the upper quartile of $1-\sqrt{F_e}$ after 10000 samples (independently chosen uniformly at random in the range $\pm\delta$, for perturbation strength $\delta$), in order to represent the process of manufacturing multiple samples, testing them, and rejecting the worst. Smaller is better.}\label{fig:perturb}
\end{figure}

The focus of this work has been to has been to create a transfer system with a broad arrival peak. We have done this effectively by fusing two chains together; a short section (e.g.\ $R=4,5$) which is divided into two halves, becoming the puny arms of the T-Rex, attached at either end of a second chain, of much greater coupling strengths, mediating communication. Effectively, perfect state transfer is achieved on the short chain, almost entirely ignoring the second chain except for this mediation it creates. So, what happens if, when we manufacture this chain, the central section is a little bit wrong? \cref{fig:perturb} conveys that, numerically, the effect on the transfer would be minimal, yielding almost an order of magnitude improvement! This is in spite of the fact that the $R=4$ chain has over 20 times longer than the Krawtchouk chain to accumulate the adverse effects of the perturbation.

\section{Encoded Transfer}

Another scenario in which the quality of transfer can be optimised is the use of encoding \cite{haselgrove2005} (see also \cite{keele2021,keele2021a,kay2022,kay2022a,kay2009}). Here, we identify a set of vertices $A$ on which a single-excitation state can be encoded, and a set of sites $B$ on which it is received. The optimal states for input and output at time $t$ are the right- and left-singular vectors of the maximum singular value of the operator
$$
\Pi_Be^{-iH_0t}\Pi_A
$$
where $\Pi_X=\sum_{i\in X}\proj{i}$.
This is easily modified to take into account timing inaccuracy since
$$
\tilde F_e=\int p(t)\bra{\psi_\text{dec}}\Pi_Be^{-iH_0t}\Pi_A\ket{\psi_{\text{enc}}}dt.
$$
$\tilde F_e$ is optimised by selecting the singular vectors corresponding to the maximum singular value of the operator
$$
M=\int p(t)\Pi_Be^{-iH_0t}\Pi_Adt.
$$
Computationally, it is easier to work with the approximation
$
\tilde M=-\Pi_BH_0^2e^{-iH_0t}\Pi_A,
$ 
but this presupposes that the encoding yields perfect transfer at the time $t_0$.

When applied to the uniform chain \cite{osborne2004,kay2009}, this optimisation makes very little difference to the known analytic result for maximising the fidelity of transfer because the transferring wavepacket already has a very broad peak, and is only supported on the low-frequency eigenvectors (the linear part of the spectrum). Nevertheless, we can see that improvements are possible in other cases.

\subsection{Case Study}

\begin{figure}
\centering
\begin{tikzpicture}
    \node[anchor=south west,inner sep=0] (image) at (0,0) {\includegraphics[width=0.48\textwidth]{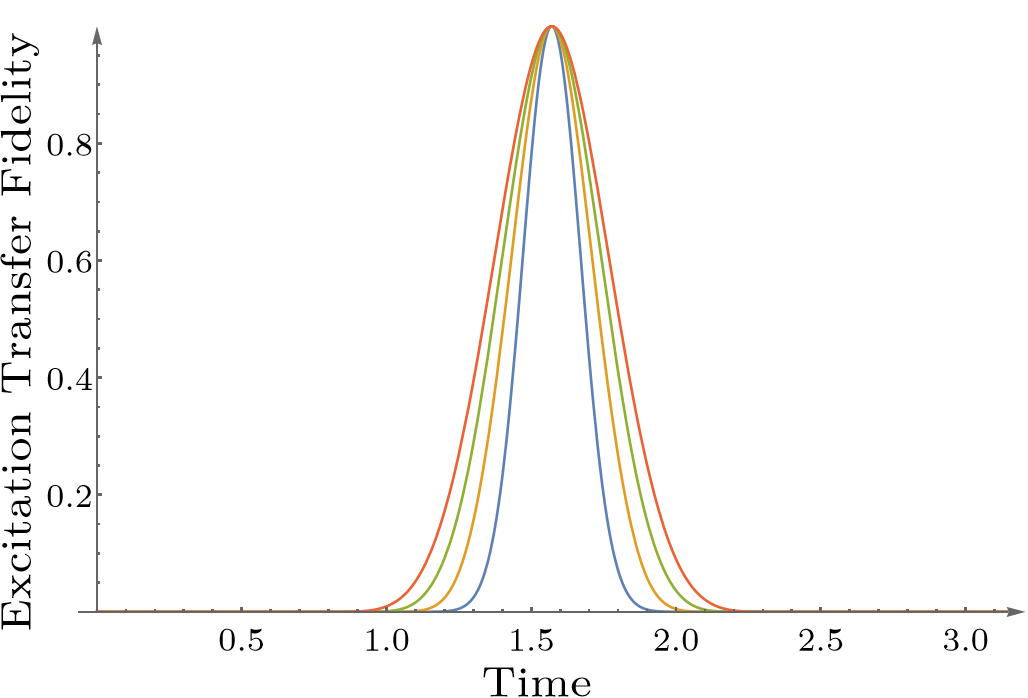}};
    \draw [-Stealth,thick] (4.7,3) -- (5.5,3.4) node [anchor=west,text width=3cm] {Increasing size of encoding region};
\end{tikzpicture}
    \caption{Arrival of a state through a length 51 perfect transfer chain with encoding/decoding regions of size ($1,3,5,7$).}\label{fig:enc}
\end{figure}

We shall take as a case study a perfect transfer chain of length $N$, and encoding and decoding regions of size $M$ (odd) at opposite ends of the chain. Any initial state restricted to the encoding region $A$ must transfer perfectly to the decoding region $B$ in the perfect transfer time $t_0$. We are therefore free to pick the encoding to optimise the timing insensitivity. The optimal may be found by finding the smallest singular value of
$$
-\tilde M=\Pi_ASH_0^2\Pi_A.
$$
We numerically plot some results in \cref{fig:enc}, observing how the arrival peak steadily broadens as we increase the size of the decoding region.
An alternative approach, (which must be sub-optimal \footnote{The $\{a_n\}$ for the eigenvectors that we remove are small, making a negligible contribution to the weighted average $\sum a_n\lambda_n^2$, so removing them completely will be less effective than removing weight from some other terms.}, but we find interesting for the connections it makes), is to consider the eigenvectors $\ket{\lambda_n}$ of $H_0$. Similar to the approach for achieving perfect transfer by encoding \cite{kay2022a,kay2022}, we find the state supported on the input site that is orthogonal to $\{\ket{\lambda_n},\ket{-\lambda_n}\}_{n=1}^{(M-1)/2}$. Since this encoding is no longer supported on the highest energy eigenvectors, the highest frequency components are eliminated, and the arrival/departure peak must be broadened. This strategy is effectively achieving the same as \cref{lem:family}, except without physically changing the chain. Were we to apply it to the uniform chain, it would effectively be achieving the results of \cite{vinet2024}. The advantages seem negligible, however -- by increasing the size of the encoding region, you are decreasing the transfer distance and you are only improving the arrival/departure width to the same extent that shortening the chain to the transfer distance would achieve. On the other hand, the advantage of encoding is that it can be applied after manufacture, and can adapt to manufacturing imperfections.

\section{Time Insensitive Fractional Revival}

In \cite{kay2010a,dai2010}, a conversion was given such that any odd length chain possessing perfect excitation transfer $\ket{1}\rightarrow\ket{N}$ can be modified into one with fractional revival,
$$
\ket{1}\rightarrow\cos2\theta\ket{1}+\sin\theta\ket{N}.
$$
All that needs to be changed is the central two coupling strengths are updated
\begin{align*}
J_{(N-1)/2}&\rightarrow J_{(N-1)/2}\sqrt2\cos(\theta),\\ J_{(N+1)/2}&\rightarrow J_{(N+1)/2}\sqrt2\sin(\theta).
\end{align*}
The rest are unchanged, as is the evolution time $t_0$. Using the T-Rex construction of \cref{sec:clearout} must realise a solution with an asymptotically optimal arrival profile.

\begin{figure}[t]
\centering
\includegraphics[width=8cm]{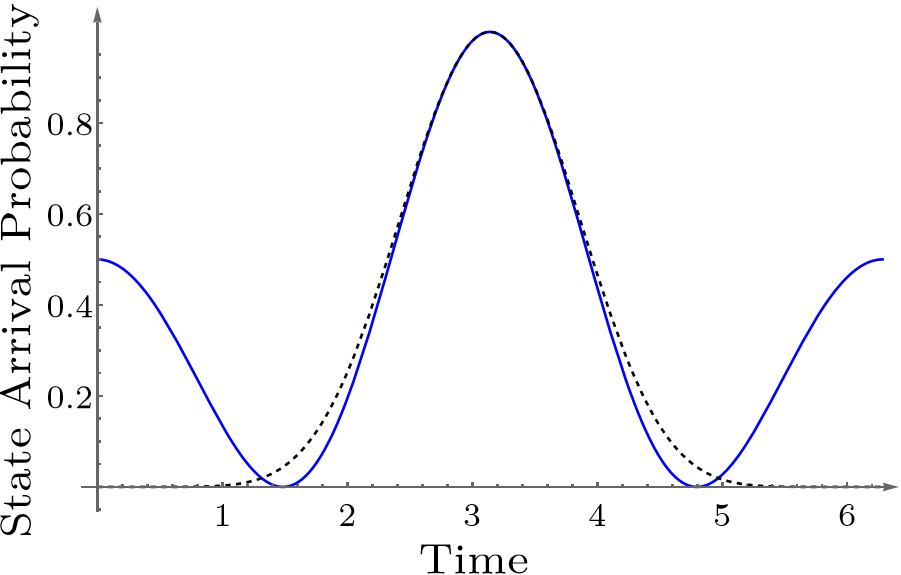}
    \caption{Fractional revival on a chain of length 11, using a T-Rex chain (blue) and the ideal $\sin^8$ behaviour (dashed).}\label{fig:fr}
\end{figure}

\begin{example}
Consider a T-Rex perfect transfer chain with $N=9$, $R=5$, $t_0=\frac{\pi}{2}$ and $\gamma=11$. We only change the central two couplings $J_5=J_6$ into $\tilde J_5=\sqrt{2}J_5\cos\frac{\pi}{8}$ and $\tilde J_6=\sqrt{2}J_5\sin\frac{\pi}{8}$.

\drawchain[0.98\textwidth]{1.01, 2.04, 12.8, 18.7, 7.75, 12.8, 2.04, 1.01}

In time $t_0$, we now create the evolution
$$
\ket{1}\rightarrow \frac{1}{\sqrt{2}}(\ket{1}+\ket{9}).
$$
If we measure the probability of this state having arrived as a function of time, then close to the arrival time, the probability behaves like the optimal $\sin^8$ behaviour (away from this, there is some deviation because the initial state is not orthogonal to the target state). This evolution is depicted in \cref{fig:fr}.    
\end{example}

There is a second method for creating fractional revival chains \cite{genest2016}. In this case, we take the spectrum of a perfect state transfer chain. The eigenvalues are alternately associated with the symmetric and anti-symmetric subspaces. In the perfect transfer time, both exhibit a perfect revival, with a relative phase of $-1$. If we shift the spectrum of the antisymmetric subspace relative to the symmetric one, both still exhibit perfect revivals, but with a different relative phase $\theta$. This means that the state at this time is
$\frac12((\ket{1}+\ket{N})+e^{i\theta}(\ket{1}-\ket{N}))$
which is, again, a perfect revival. The T-Rex chains can be adapted in the same way, providing fractional revival chains with broad arrival/departure peaks that are asymptotically optimal for their insensitivity to timing errors. This construction is not limited by the parity of the chain length.

\section{Summary}

In this paper, we have resolved how insensitive a perfect state transfer chain can be made to timing errors. Unlike the original, na\"ive, analysis, we achieve a broad arrival peak of the form $F_e=\sin^6(t)$ or $\sin^8(t)$ for any even or odd length of chain respectively. We have proven that this is asymptotically optimal by specialising the Mandelstam-Tamm bound to the specific circumstance of perfect state transfer. The resulting chains are also very robust against perturbations. We also applied the same T-Rex construction to fractional revivals. Examples of all of these may be found in \cite{kay2025d}.

The limit at which we work moves the operating regime of these spins chains away from the useful working limit of the optimal perfect state transfer time. The optimal performance at these shorter times remains open, with a modest gap between the current best-known chains and what the limit of \cref{thm:main} allows. Moreover, \cref{thm:main} does not cover the case of high fidelity transfer chains which to not achieve perfect transfer, but may achieve their transfer on shorter timescales, such as the Apollaro chain \cite{apollaro2012}, or perhaps the chains that interpolate between uniform and PST \cite{vinet2024}. While the Apollaro chain has a shorter arrival time, its arrival peak is thinner, so the net performance is worse, as indicated in \cref{fig:tradeoff}. Perhaps these chains could be further designed to improve performance over the perfect transfer ones, whether by achieving a similar arrival profile at a shorter time, or better arrival profile at the same time. There is some potential for a reanalysis of \cite{apollaro2012}: if we define a sub-optimal target fidelity, then one can vary over the available parameters that all achieve this baseline fidelity, and find the one with the broadest arrival peak. We leave this, or other routes towards improved arrival profile for high fidelity transfer, for future investigation.

The insights of the T-Rex construction are profound, and extend well beyond the present aim. In state transfer, we can also apply the ideas to uniform networks, while in algorithmic scenarios, the puny arms are remarkably effective in both quantum search and matrix inversion. These will be elucidated in future works.

\bigskip
\section*{Acknowledgments}

C.T.\ is supported by National Science Foundation grant OSI-2427020.
We thank Ada Chan and Pierre-Antoine Bernard for helpful discussions.

%
\end{document}